\newcommand{\Oh}{\mathcal{O}}
\newcolumntype{C}{>{\centering\arraybackslash}X} 
\newtheorem{theorem}{Theorem}
\newtheorem{lemma}[theorem]{Lemma}
\newtheorem{conjecture}[theorem]{Conjecture}
\newtheorem{corollary}[theorem]{Corollary}
\title{Conditional Lower Bounds for Variants of Dynamic LIS}
\author[ ]{Paweł Gawrychowski\thanks{\texttt{\href{mailto:gawry@cs.uni.wroc.pl}{gawry@cs.uni.wroc.pl}}}}
\author[ ]{Wojciech Janczewski\thanks{\texttt{\href{mailto:wojciech.janczewski@cs.uni.wroc.pl}{wojciech.janczewski@cs.uni.wroc.pl}}}}
\affil[ ]{University of Wrocław}
\date{}
\begin{document}

\maketitle

\begin{abstract}
In this note, we consider the complexity of maintaining the longest increasing subsequence (LIS) 
of an array under (i) inserting an element, and (ii) deleting an element of an array.
We show that no algorithm can support queries and updates in time $\Oh(n^{1/2-\epsilon})$ and
$\Oh(n^{1/3-\epsilon})$ for the dynamic LIS problem, for any constant $\epsilon>0$, when the elements are weighted
or the algorithm supports 1D-queries (on subarrays), respectively, assuming the All-Pairs Shortest Paths (APSP)
conjecture or the Online Boolean Matrix-Vector Multiplication (OMv) conjecture.
The main idea in our construction comes from the work of Abboud and Dahlgaard [FOCS 2016], who proved
conditional lower bounds for dynamic planar graph algorithm.
However, this needs to be appropriately adjusted and translated
to obtain an instance of the dynamic LIS problem.
\end{abstract}


\section{Introduction}

\paragraph{LIS}
Computing the length of a longest increasing subsequence (LIS) is one of the basic algorithmic
problems. Given a sequence $(a_{1},a_{2},\ldots,a_{n})$ with a linear order on the elements, an increasing subsequence
is a sequence of indices $1\leq i_{1} < i_{2} < \ldots < i_{\ell}\leq n$ such that $a_{i_{1}} < a_{i_{2}} < \ldots < a_{i_{\ell}}$.
We seek the largest $\ell$ for which such a sequence exists.
The classic solution~\cite{Fredman} works in time $\Oh(n\log{n})$.

Very recently, starting with the paper of Mitzenmacher and Seddighin~\cite{MitzenmacherS20},
the problem of dynamic LIS was explored.
In a dynamic version of LIS, we need to maintain the length of LIS of an array
under (i) inserting an element, and (ii) deleting an element of an array.
In~\cite{GawrychowskiJ21}, an algorithm working in polylogarithmic time and providing
$(1+\epsilon)$-approximation was presented, and in~\cite{KociumakaS21} Kociumaka and Seddighin
designed a randomised exact algorithm with the update and query time $\tilde{\Oh}(n^{4/5})$.
As for now, no lower bounds are known for the dynamic LIS, besides trivial $\Omega(\log{n})$
in the comparison model.

\paragraph{LIS variants.}
In the variant of LIS with 1D-queries, the algorithm needs to answer queries about the length of LIS
in a subarray $(a_{i},a_{i+1},\ldots,a_{j})$, given any $i,j$.
For static LIS, it is known how to build a structure of size $\Oh(n\log n)$ capable of providing
exact answers to such queries in $\Oh(\log n)$ time using the so-called unit-Monge matrices~\cite[Chapter 8]{Tiskin}.
Interestingly, both existing algorithms for dynamic approximate (global) LIS~\cite{GawrychowskiJ21,KociumakaS21} in fact
support such 1D-queries within the same time complexity,
although that is not the case for the dynamic exact LIS~\cite{KociumakaS21}.

In the weighted variant of LIS, elements have assigned nonnegative weights and we seek an
increasing subsequence with the maximum sum of weights of elements.
Note that in a dynamic weighted LIS changing the weight of an element can be simply done by first deleting and
then inserting a new element.
For the static LIS, the weighted case is also solvable in $\Oh(n\log{n})$ time with a simple algorithm.

Any algorithm for weighted LIS can easily handle 1D-queries.
Let $M$ be the maximum weight of any element of the array of $n$ elements.
Then to answer $(i,j)$ 1D-query, we can insert immediately to the left of $a_i$ an element with weight $M(n+1)$
and being minimum in the linear order of elements, and insert immediately to the right of $a_j$ an element
with weight $M(n+1)$ and being maximum in the linear order of elements.
The global maximum weight of an increasing subsequence in this modified array
is then the weight of an increasing subsequence in the original array on elements between $a_{i}$ 
and $a_{j}$ only, increased by $2M(n+1)$.

\paragraph{Our results.}
In this note, we apply the construction of Abboud and Dahlgaard~\cite{AbboudD16}, originally designed
for distances in planar graphs, to provide conditional polynomial lower bounds for dynamic LIS with 1D-queries
(so also for dynamic weighted LIS). The main idea is to simulate their grid gadgets by an appropriately
designed set of points in the plane, and arrange the points in an array in the natural left-to-right
order. Then, we are able to extract the original distance by querying for the LIS in a subarray.

The recent algorithms for dynamic ($1+\epsilon$)-approximation of
(global) LIS~\cite{MitzenmacherS20,GawrychowskiJ21,KociumakaS21} support 1D-queries without any modifications.
On the other hand, the recent sublinear algorithm for dynamic \emph{exact} LIS~\cite{KociumakaS21} does not.
We hope that our polynomial lower bound for this natural generalisation
helps to understand why there is such a large gap between the polylogarithmic
complexity time for dynamic ($1+\epsilon$)-approximate LIS~\cite{GawrychowskiJ21} and 
$\tilde{\Oh}(n^{4/5})$ time complexity for dynamic exact LIS~\cite{KociumakaS21}.

\section{Preliminaries}
In this work, we consider sets of 2D points.
A point $p_1=(x_1,y_1)$ is dominated by $p_2=(x_2,y_2)$,
denoted $p_1 \prec p_2$, if $x_1 < x_2$ and $y_1 < y_2$.
For any set of points $S$, its ordered subset of points $P=(p_0,p_1,\ldots,p_k)$ is a chain
if $p_0 \prec p_1 \prec \ldots \prec p_k$.
When points have weights, the weight of a chain is the sum of weights of its points.
$P=(p_0,p_1,\ldots,p_k)$ is an antichain if for all $0 \leq i,j \leq k$, $p_i \not \prec p_{j}$.

When we wish to compute the longest chain in a set of points, we simply arrange
its points in an array, sorted by the $x$-coordinates, and then compute LIS with respect to the $y$-coordinates.
Maintaining such an array for a dynamic set of points is easily realised in logarithmic time with any BST
supporting $\mathsf{FindRank}(r)$ operation, which returns a pointer to an item of rank $r$ in the set of items in BST.
Then we can find the value of the current element at index $r$, delete the current element at index $r$,
or insert a new element strictly before or after the current element at index $r$, all with logarithmic overhead.

\paragraph{Two popular conjectures.}
For the weighted case, we assume the following conjecture on the well-known APSP problem:

\begin{conjecture}\label{con:apsp}
There exists no algorithm solving the all-pairs shortest paths (APSP) problem in general weighted graphs
in time $\Oh(n^{3-\epsilon})$, for any constant $\epsilon > 0$.
\end{conjecture}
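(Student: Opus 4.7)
Conjecture~\ref{con:apsp} is not a theorem to be proved — it is the well-known APSP hypothesis, stated here as a hardness assumption on which the later conditional lower bounds in the paper will rest. Accordingly, in place of a proof plan, I can only outline the evidence and heuristic arguments that motivate it, and explain why it is reasonable to take it as a black-box assumption.

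The plan would be to observe that, despite decades of effort, no truly subcubic algorithm for APSP on general weighted graphs is known: the current record, due to Williams, runs in time $n^{3}/2^{\Omega(\sqrt{\log n})}$, which shaves only a subpolynomial factor off the textbook Floyd--Warshall bound. More importantly, APSP sits at the centre of a robust equivalence class under subcubic reductions, together with problems such as Negative Triangle, (min,+)-Matrix Product, Metricity, Radius, Replacement Paths, and Second Shortest Path; a truly subcubic algorithm for \emph{any} of them would yield one for APSP, and vice versa. This web of equivalences, established by Williams and Williams and extended by subsequent work, is the main structural evidence that a polynomial improvement over $n^{3}$ is unlikely without a breakthrough in algebraic fine-grained complexity.

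The main obstacle, of course, is that no unconditional proof of such a lower bound is within reach of current techniques: any proof that APSP requires $n^{3-o(1)}$ time in, say, the word-RAM model would constitute a major breakthrough in circuit or data-structure lower bounds, far beyond the state of the art. For this reason the statement is universally treated as a conjecture, and the reader is expected to accept it on the basis of the accumulated algorithmic and reductional evidence described above, exactly as in the other works (e.g.\ \cite{AbboudD16}) whose framework is adopted here.
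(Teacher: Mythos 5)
You correctly recognize that Conjecture~\ref{con:apsp} is a hardness hypothesis stated without proof, and the paper likewise simply records it as an assumption (citing its use in the fine-grained complexity literature) rather than attempting to prove it. Your summary of the supporting evidence is accurate and consistent with how the paper treats the conjecture.
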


We reduce from the (max,$+$)-matrix-product problem, denoted by $\circ$,
which is known to be equivalent to APSP~\cite{FischerM71}. 
In this problem we are given $n \times n$ matrices $A$, $B$ having integer weights in $\{0,\ldots,M\}$
and want to compute matrix $C$, with $C_{i,j}=\max_{k}(A_{i,k}+B_{k,j})$.
As those matrices are weighted, for the unweighted case we need a weaker conjecture
on Boolean variables.
In the online Boolean matrix-vector multiplication (OMv) problem we are given $n \times n$ matrix $A$,
and let $v_1,\ldots,v_n$ be Boolean vectors arriving online.
We need to preprocess $A$ and then for every $i$ output $Av_i$ before seeing $v_{i+1}$.
It was conjectured in~\cite{HenzingerKNS15} (see also~\cite{LarsenW17,ChakrabortyKL18}) that:

\begin{conjecture}\label{con:omv}
There exists no algorithm solving the OMv problem in time $\Oh(n^{3-\epsilon})$, for any constant $\epsilon > 0$.
\end{conjecture}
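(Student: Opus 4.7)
The excerpt closes with Conjecture~\ref{con:omv}, the OMv conjecture, which the paper takes as a hypothesis rather than a theorem; no proof is known and none is attempted here. Treating the task literally, I would outline the two natural attack routes and explain why each currently fails, since any honest ``proof proposal'' must confront that this is an open problem at the heart of fine-grained complexity.

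\paragraph{What one could try.}
The first route is an unconditional cell-probe (or circuit) lower bound for online Boolean matrix-vector multiplication: fix word size $\Theta(\log n)$, pick a hard distribution on $A$ and on the online vectors $v_1,\ldots,v_n$, and argue information-theoretically that any data structure must issue $\Omega(n^{3-\epsilon})$ cell probes in total. The second route is a fine-grained reduction from a problem already known to require cubic time unconditionally. Route one is preferable in principle, because route two requires identifying an unconditional cubic-hard source, and none of the canonical candidates (APSP, BMM, $(\min,+)$-convolution, 3-SUM variants) is unconditionally hard; indeed, in the existing web of reductions they are typically shown hard \emph{from} OMv rather than the other way around, so OMv sits at a source of the web rather than a sink.

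\paragraph{Main obstacle.}
The core difficulty is preprocessing. The data structure may store $\tilde{\Theta}(n^2)$ bits of arbitrary functions of $A$, and known cell-probe techniques extract only polylogarithmic per-query lower bounds from such preprocessed inputs, orders of magnitude below the target $\Omega(n^{3-\epsilon})$. Bridging this gap would itself be a landmark result, and it is precisely this gap that the conjecture encodes. Consequently, the paper — like essentially all recent work on dynamic data-structure lower bounds — adopts the conjecture as given and, in the sections that follow, concentrates on the positive task for which it is needed: exhibiting reductions from OMv (and from $(\max,+)$-matrix product, for the weighted case) to dynamic LIS, via a 2D point construction that simulates the Abboud--Dahlgaard grid gadgets and extracts distances through 1D subarray LIS queries.
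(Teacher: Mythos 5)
You correctly recognize that Conjecture~\ref{con:omv} is an unproven hypothesis, not a theorem: the paper does not attempt a proof either, but simply states it with a citation to Henzinger et al.\ and uses it as an assumption for the conditional lower bound. Your discussion of why the conjecture resists proof is reasonable but goes beyond anything the paper does; the short answer is that there is nothing to compare, since neither you nor the paper proves it.
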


\section{The embedding of a Matrix}
Assume for now that matrices $A$, $B$ are Boolean.
Columns and rows are numbered from $0$ to $n-1$.
For each $0 \leq k < n$, we define embedding $S_k$ of $A$ and the $k$-th column of $B$, $B_k$,
as the union of six sets of points, namely $S_k = S_l \cup S_l' \cup S_r \cup S_a \cup S_a' \cup S_{b,k}$.
In total, $S_k$ contains $3n^2+3n$ points.
It consists of the left and right side, and three sets of special points.
It is best to inspect Figure~\ref{fig:emb} before reading further.

\begin{figure}
\begin{center}
  \includegraphics[width=\linewidth]{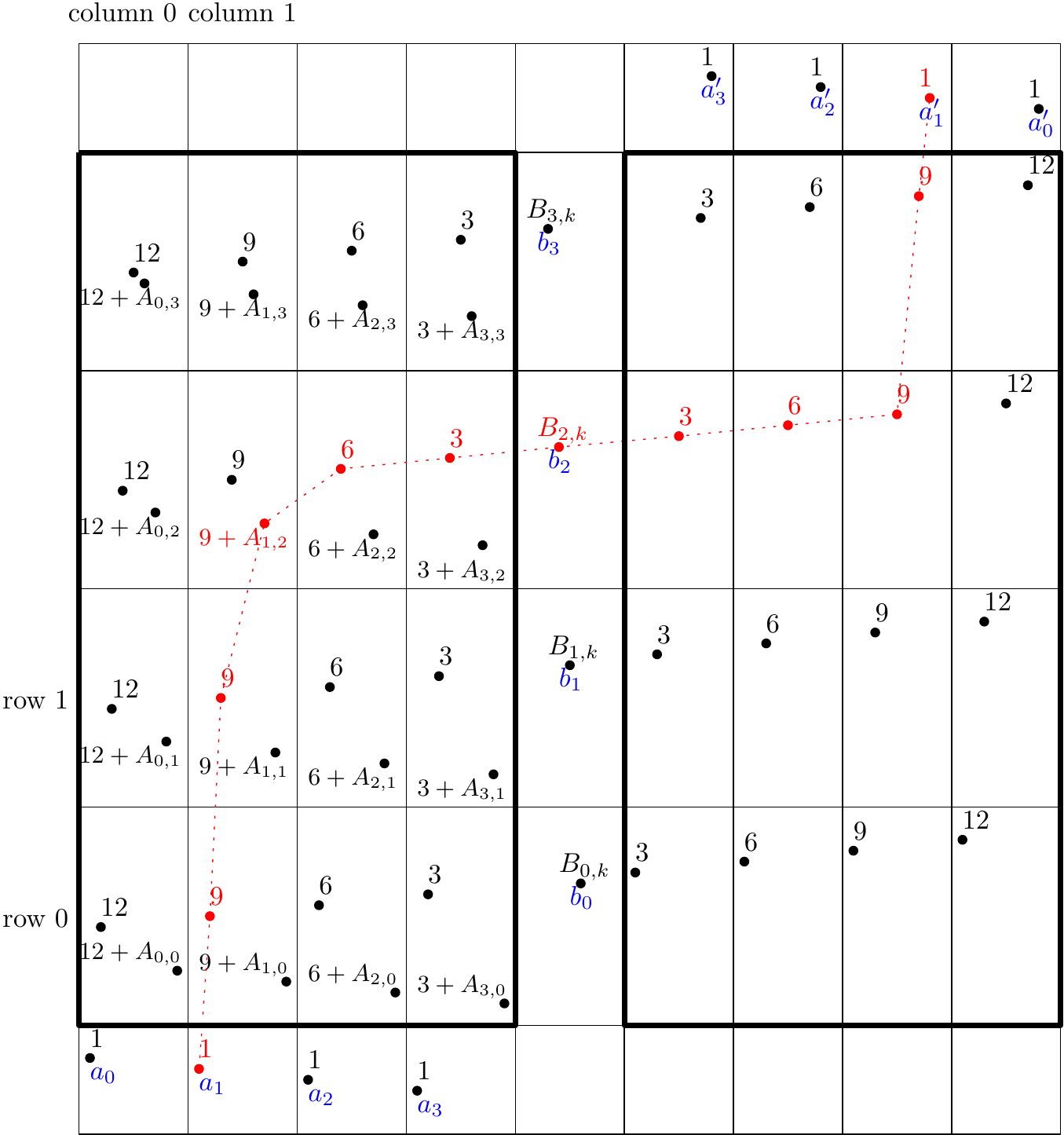}
\end{center}
  \caption{Embedding of matrix A and the $k$-th column of matrix B, with $n=4$.
  Each point is given an integer weight.
  Special points $a_0,\ldots,a_{n-1},b_0,\ldots,b_{n-1},a'_0,\ldots,a'_{n-1}$ are marked in blue.
  On the left side, weights of points use weights from transposed $A$.
  Turns allow to gain $A_{i,j}$, but taking them too early is not beneficial due to
  diminishing weights in columns.
  In red, the maximum weight chain from $a_1$ to $a'_1$ is highlighted.
  The right side of the embedding makes the choice of $b_i$ dependent only on values from matrices.}
  \label{fig:emb}
\end{figure}

On the left side, we imagine there is a $n \times n$ grid.
There are two sets of points $S_l$ and $S_l'$, both having one point in each cell of this grid.
For each column or row, points of $S_l$ form one chain, while points from $S_l'$ form one antichain.
Additionally, in each cell the point from $S_l$ is above and to the left of the point from $S_l'$.
As for the weights, in column $j$ points from $S_l$ have weights $3(n-j)$, and in cell $i,j$ the point
from $S_l'$ has the weight $3(n-j)+A_{j,i}$, so depending on the value from transposed matrix $A$.
Formally, for each $0 \leq i,j < n$, $S_l$ contains the point $l_{i,j}=(j(2n+1)+i+2,i(3n+1)+2n+j+1)$
with the weight $3(n-j)$, and $S_l'$ contains the point
$l'_{i,j}=((j+1)(2n+1)-i,i(3n+1)+2n-j)$ with the weight $3(n-j)+A_{j,i}$.
The first special set of points $S_a$ is also defined using the left grid.
All points from $S_a$ are below points from $S_l$ and $S_l'$, additionally $j$-th point is to the left
of all points from $j$-th column of the grid and to the right of points from $(j-1)$-th column.
All these points have the weight $1$ and form an antichain.
Formally, for each $0 \leq j < n$, $S_a$ contains a point $a_j=(j(2n+1)+1,n-j)$ with the weight $1$.

On the right side, we also have $n \times n$ grid and the set of points $S_r$, which is basically
$S_l$ rotated by $180$ degrees.
In each cell of the grid, there is one point from $S_r$, and points in columns or rows form chains,
but weights in column $j$ are $3(j+1)$.
Formally, for each $0 \leq i,j < n$, $S_r$ contains a point $((2n+j)(n+1)+i+1,(i+1)(3n+1)+j+1)$ with 
the weight $3(j+1)$.
The second special set of points $S_a'$ is defined using the right grid.
All points from $S_a'$ are above points from $S_r$ and additionally, $j$-th point is to the right
of all points from $j$-th column of the grid and to the left of points from $(j+1)$-th column.
All these points have the weight of $1$ and form an antichain.
Formally, for each $0 \leq j < n$, $S_a'$ contains a point $a'_{n-j-1}=((2n+j+1)(n+1),3n(n+1)-j)$ with 
the weight $1$.

The last special set of points is $S_{b,k}$.
Those points form one antichain and are placed between the left and right grids.
$i$-th point is above all the points from the $i$-th row from the left grid
and below all the points from the $i$-th row of the right grid.
The weight of $i$-th point is just $B_{i,k}$.
Formally, for each $0 \leq i < n$, $S_{b,k}$ contains a point $b_i=(2n(n+1)-i,(i+1)(3n+1))$
with the weight $B_{i,k}$.

Note that for $k \neq k'$, embeddings $S_k$ and $S_k'$ differ only on weights of $n$ points from $S_{b,k}$ and $S_{b,k'}$.

\paragraph{Properties of the longest chains.}
Given $S_k$ we need to inspect the maximum weight of a chain starting with $a_j$ and ending with $a'_j$,
for any $j$.
Observe that there is such a maximum chain containing a point from $S_{b,k}$,
thus we can focus on chains from $a_j$ to $b_i$ and from $b_i$ to $a'_j$, for any $i$.

The main idea here is that the maximum weight chain should use at most one point from $S'_l$.
Those points can be seen as 'turns', meaning that if point $p$ in column $j$ from $S'_l$ is taken,
then the chain cannot contain any point from column $j$ above $p$.
It is not beneficial to take such a turn too quickly, since by changing a column early one can gain at most 2
(from two non-zero entries in a matrix), at the same time losing at least 3 from diminishing weights in columns.

Let $c_k(p_1,p_2)$ be the maximum weight of a chain starting with a point $p_1$ and ending with
a point $p_2$, in set $S_k$.
Then we have:

\begin{lemma}\label{lem:cases}
Consider any $0 \leq i,i',j,k < n$, with $i \leq i'$.
\begin{enumerate}
\item If $i = i'$, $c_k(l_{i,j},b_{i'})=1.5(n-j)(n-j+1)+B_{i',k}$.
\item If $i < i'$, $c_k(l_{i,j},b_{i'})=(3n-3j)(i'-i)+1.5(n-j)(n-j+1)+A_{j,i'}+B_{i',k}$.
\item If $i = i'$, $c_k(l'_{i,j},b_{i'})=1.5(n-j)(n-j+1)+A_{j,i}+B_{i',k}$.
\item If $i < i'$, $c_k(l'_{i,j},b_{i'})=(3n-3j-3)(i'-i)+1.5(n-j)(n-j+1)+A_{j,i}+A_{j+1,i'}+B_{i',k}$, where we assume $A_{n,\cdot}=0$.
\end{enumerate}
\end{lemma}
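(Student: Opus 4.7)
The plan is to prove each of the four equalities by a two-sided argument: exhibit a chain achieving the claimed weight, then argue no chain does better. Since an $l$-point in column $c$ has weight $3(n-c)$ (decreasing in $c$) and an $l'$-point contributes an additional $A_{c,i''}\in\{0,1\}$, the optimal chain should dwell in the leftmost column it can reach and collect a single $A$-bonus via a well-placed ``turn''.

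For achievability, the chains I would use are as follows. Case~1: $l_{i,j}, l_{i,j+1}, \ldots, l_{i,n-1}, b_i$. Case~2: go down column $j$, turn at $l'_{i',j}$, then sweep row $i'$, i.e.\ $l_{i,j}, l_{i+1,j}, \ldots, l_{i'-1,j}, l'_{i',j}, l_{i',j+1}, \ldots, l_{i',n-1}, b_{i'}$. Case~3: the row-$i$ sweep from the $l'$-start, $l'_{i,j}, l_{i,j+1}, \ldots, l_{i,n-1}, b_i$. Case~4: jump to column $j+1$ (forced by the dominance constraints from $l'_{i,j}$) and then mirror the case~2 construction from column $j+1$, taking the turn at $l'_{i',j+1}$, i.e.\ $l'_{i,j}, l_{i,j+1}, \ldots, l_{i'-1,j+1}, l'_{i',j+1}, l_{i',j+2}, \ldots, l_{i',n-1}, b_{i'}$. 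Verifying each stated weight amounts to summing one arithmetic progression plus collecting the $A$- and $B$-terms.

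For the upper bound, let $k_c$ be the number of chain points in column $c$. The coordinates of $l_{r,c}$ and $l'_{r,c}$ yield the dominance rules: the entry row of column $c+1$ is at least the exit row of column $c$; each column hosts at most one $l'$-point (at the bottom of its column-run); and in cases~1,\,3 no $l'$-point with row $>i$ can be used at all, since the chain must end at row $\leq i'=i$. An optimal chain visits every column $c\in[j,n-1]$ (otherwise we lose at least $3$ of weight with no compensation), so $k_c\geq 1$. Starting at row $i$ in column $j$ and ending at row $\leq i'$ then yields $\sum_c k_c\leq(n-j)+(i'-i)$. Subject to these constraints, the column weight $\sum_c 3(n-c)k_c$ is maximized by concentrating the $i'-i$ excess down-movement into the leftmost available column---column $j$ for cases~1,\,2 and column $j+1$ for cases~3,\,4 (since in the latter $k_j=1$ is already used by the starting $l'$)---and the resulting expression telescopes precisely to the column-weight part of the formula.

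The $A$-bonuses are then controlled by a short exchange argument: since $A\in\{0,1\}$, each $l'$-point contributes at most $1$ beyond its column weight, while inserting an $l'$-point in a column $c'$ strictly larger than the home column forces a unit of down-movement out of the home column and into a heavier-index column, losing at least $3(c'-j)\geq 3$---which dominates the at-most-$1$ bonus. Hence the only bonuses collectible without net loss are $A_{j,i}$ (forced by the starting $l'$ in cases~3,\,4), $A_{j,i'}$ from the single turn in case~2, and $A_{j+1,i'}$ from the single turn in case~4, matching the formulas. The main obstacle is turning this ``trading a down-step for an $A$-bonus is unprofitable'' intuition into a clean inductive exchange that handles all chain topologies---multiple $l'$-points, fragmented column visits---while tracking the subtly non-uniform strictness of the four transitions $l\to l$, $l\to l'$, $l'\to l$, $l'\to l'$ on the row and column indices.
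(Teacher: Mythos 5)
Your approach is genuinely different from the paper's: you propose a two-sided argument (exhibit an optimal chain, then bound all chains from above via a global exchange argument), whereas the paper uses a local induction on decreasing $j$ and then decreasing $i$, where at each step it observes that a maximum chain from the starting point must have its \emph{second} point among at most three candidates ($l_{i+1,j}$, $l'_{i+1,j}$, $l_{i,j+1}$ when starting at $l_{i,j}$; forced to $l_{i,j+1}$ when starting at $l'_{i,j}$), reducing to smaller instances. That local step replaces your global optimization entirely.

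Your achievability half is correct: the four explicit chains do dominate pairwise, and their weights telescope to the stated formulas (with the understood convention that for $j=n-1$ the tail past column $n-1$ is empty, matching $A_{n,\cdot}=0$). The gap is in the upper-bound half, and you name it yourself. The claim that the excess $\sum_c k_c - (n-j)\le i'-i$ of down-movement should be concentrated in the leftmost reachable column, and that every additional $l'$-turn trades a down-step for an $A$-bonus at a net loss, is the right intuition, but it is never converted into a proof. In particular, the bookkeeping is delicate: a transition $l'\to l'$ (or $l'\to l$ followed by a column change) consumes \emph{both} a row increment and a column increment, so the budget $\sum_c k_c \le (i'-i)+(n-j)$ is not tight once multiple turns appear, and the ``lose at least $3$'' accounting has to be argued per turn, against chains that may fragment the excess across several columns and still collect several $A$-bonuses. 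You write that the ``main obstacle is turning this \ldots intuition into a clean inductive exchange''; that obstacle is exactly the missing content, and without it the upper bound does not stand. The paper's induction sidesteps this: by peeling off one point at a time and maximizing over the three candidates, the ``at most one profitable turn'' fact falls out of the max computation $\max\{0,\,A_{j,i'},\,A_{j+1,i'}-3\}$ rather than needing a global exchange over arbitrary chain topologies. If you want to keep your two-sided structure, the cleanest fix is to prove the upper bound itself by the same local induction (show $c_k(\cdot,b_{i'})\le$ claimed value by considering the possible second points), at which point you have essentially rederived the paper's argument.
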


\begin{figure}[h]
\begin{center}
  \includegraphics[scale=0.8]{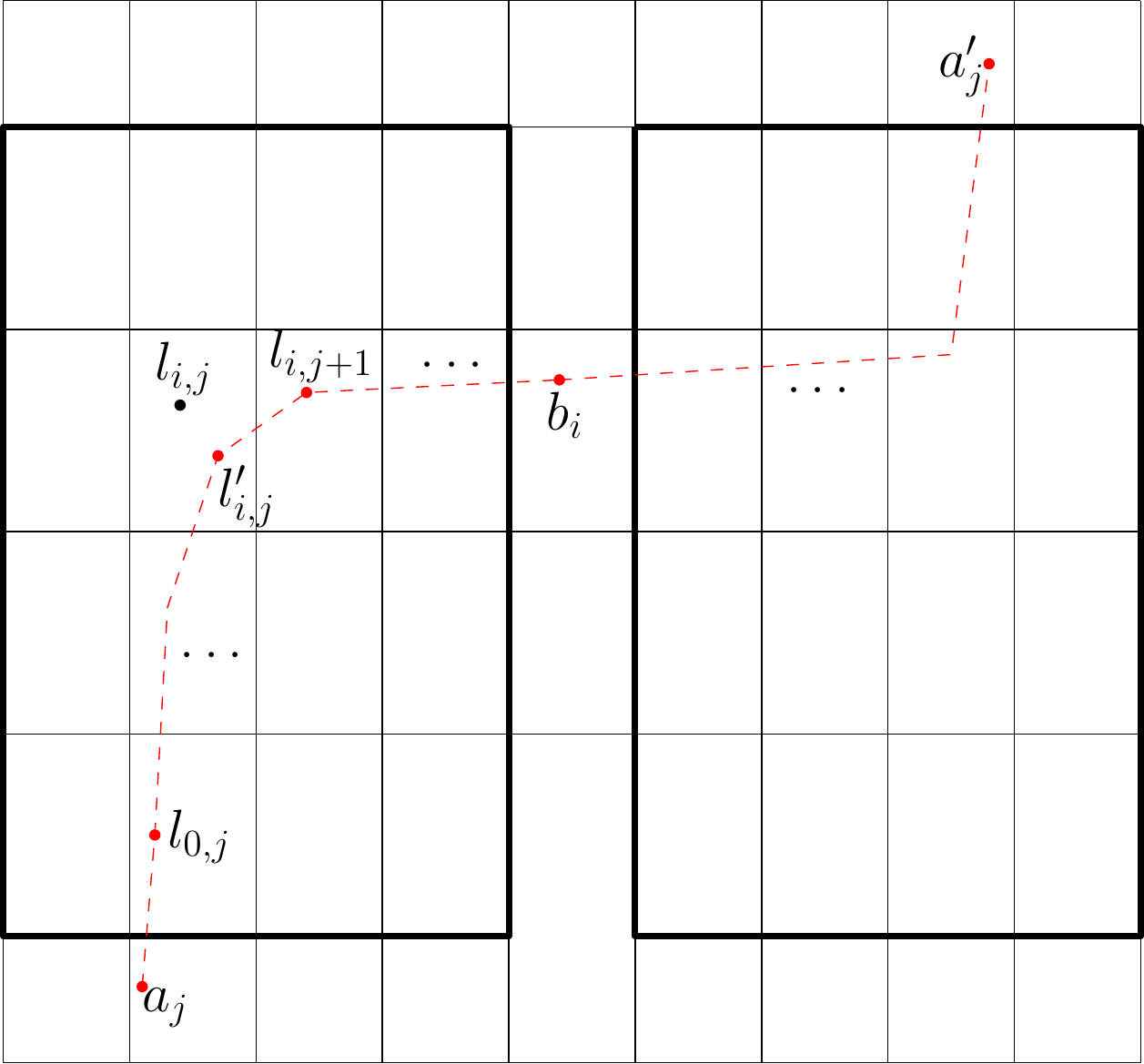}
\end{center}
  \caption{Simplified representation of the maximum chain from $a_j$ to $a'_j$.}
  \label{fig:mini}
\end{figure}

\begin{proof}
The proof, for any $k$ and $i'$, is by induction on decreasing $j$ and then decreasing $i$.
For $j=n-1$ the claim is easy to verify --- the maximum chain from $l'_{i,n-1}$ to $b_{i'}$ contains only two points,
and from $l_{i,n-1}$ to $b_{i'}$ it is a chain containing, except endpoints,
$i'-i-1$ points from $S_l$ in a column $n-1$, and $l'_{i',n-1}$.
The case of $i=i'$ is also straightforward to verify, as then we have just a chain in a single row of the grid.

Now assume $i<i',j<n-1$.
For a maximum chain starting with $l'_{i,j}$, the next point in the chain must be $l_{i,j+1}$,
thus by induction the maximum weight is
\begin{align*}
&(3(n-j)+A_{j,i})+(3n-3j-3)(i'-i)+1.5(n-j-1)(n-j)+A_{j+1,i'}+B_{i',k} = \\
&(3n-3j-3)(i'-i)+1.5(n-j)(n-j+1)+A_{j,i}+A_{j+1,i'}+B_{i',k},
\end{align*}
which gives us equality 4.

Now assume $i=i'-1,j<n-1$.
For a maximum chain starting at $l_{i,j}$, the next point in the chain must be $l_{i',j}$,
$l'_{i',j}$ or $l_{i,j+1}$.
Thus, by induction its weight is
\begin{align*}
& 3(n-j)+ \max\{1.5(n-j)(n-j+1)+B_{i',k},\\
&1.5(n-j)(n-j+1)+A_{j,i'}+B_{i',k},\\
&(3n-3j-3)+1.5(n-j-1)(n-j)+A_{j+1,i'}+B_{i',k}\} = \\
& 3(n-j)+1.5(n-j)(n-j+1)+B_{i',k} +\max\{0,A_{j,i'},A_{j+1,i'}-3\}=\\
&(3n-3j)(i'-i)+1.5(n-j)(n-j+1)+A_{j,i'}+B_{i',k}.
\end{align*}

Now assume $i<i'-1,j<n-1$.
For a maximum chain starting at $l_{i,j}$, the next point in the chain must be $l_{i+1,j}$,
$l'_{i+1,j}$ or $l_{i,j+1}$.
Thus, by induction its weight is
\begin{align*}
& 3(n-j)+ \max\{(3n-3j)(i'-i-1)+1.5(n-j)(n-j+1)+A_{j,i'}+B_{i',k}, \\
& (3n-3j-3)(i'-i-1)+1.5(n-j)(n-j+1)+A_{j,i+1}+A_{j+1,i'}+B_{i',k}, \\
& (3n-3j-3)(i'-i)+1.5(n-j-1)(n-j)+A_{j+1,i'}+B_{i',k}\} = \\
& 3(n-j)+(3n-3j)(i'-i-1)+1.5(n-j)(n-j+1)+B_{i',k}
\\&+\max\{A_{j,i'},3i-3i'+3+A_{j,i+1}+A_{j+1,i'},3i-3i'+A_{j+1,i'}\}=\\
&(3n-3j)(i'-i)+1.5(n-j)(n-j+1)+A_{j,i'}+B_{i',k},
\end{align*}
as $i'-i>1$ and matrices are Boolean.
Thus, we get the remaining equality 2.
\end{proof}

From Lemma~\ref{lem:cases} we immediately have:
\begin{corollary}\label{lem:paths1}
For any $0 \leq i,j,k < n$, $c_k(a_j,b_i)=(3n-3j)i+1.5(n-j)(n-j+1)+1+A_{j,i}+B_{i,k}$.
\end{corollary}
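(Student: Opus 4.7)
The plan is to decompose the optimal chain from $a_j$ to $b_i$ as $a_j$ (contributing weight $1$) followed by a subchain that starts at some point inside the left grid and ends at $b_i$, and then to maximize over the possible entry points using Lemma~\ref{lem:cases}. First I would verify, directly from the coordinates of $a_j$, $l_{i'',j''}$ and $l'_{i'',j''}$, that any grid point dominating $a_j$ has column index $j'' \geq j$, and that for such a point to precede $b_i$ in a chain its row index must satisfy $i'' \leq i$.

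Next, I would apply Lemma~\ref{lem:cases} to each candidate entry point $l_{i'',j''}$ or $l'_{i'',j''}$ with $j'' \geq j$ and $i'' \leq i$, and show that the optimum is attained at $l_{0,j}$ when $i > 0$ (invoking case~2 of the lemma at $(l_{0,j}, b_i)$) and at $l'_{0,j}$ when $i = 0$ (invoking case~3). In either case the resulting subchain weight equals $(3n-3j)\,i + 1.5(n-j)(n-j+1) + A_{j,i} + B_{i,k}$, and adding the weight $1$ of $a_j$ gives the formula in the statement.

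The main obstacle is verifying that no other entry point does better. Shifting the entry one column to the right costs $1.5(n-j'')(n-j''+1) - 1.5(n-j''-1)(n-j'') = 3(n-j'')$ from the quadratic term, plus an additional $3(i-i'')$ from the $(3n-3j'')(i-i'')$ term, while gaining at most $2$ from different $A$-entries since $A$ is Boolean. Shifting the entry to a lower row within column $j$ costs $3(n-j) \geq 3$ per step. Finally, at $(i'',j'') = (0,j)$ one compares $l_{0,j}$ against $l'_{0,j}$: when $i > 0$, case~2 beats case~4 by $3i + A_{j,i} - A_{j,0} - A_{j+1,i} \geq 1$; when $i = 0$, case~2 is vacuous, case~3 applies, and $l'_{0,j}$ contributes exactly the $A_{j,0}$ term that the statement predicts.
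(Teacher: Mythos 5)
Your proposal is correct and fills in, in the natural way, exactly what the paper leaves implicit when it writes ``From Lemma~\ref{lem:cases} we immediately have'': decompose the chain as $a_j$ (weight $1$) followed by a subchain beginning at the first grid point it touches, apply the four cases of the lemma to every admissible entry point $l_{i'',j''}$ or $l'_{i'',j''}$ with $j''\geq j$ and $i''\leq i$, and use the monotonicity of the formulas in $j''$ and $i''$ (together with the direct comparison at $(0,j)$) to conclude the optimum is at $l_{0,j}$ when $i>0$ and $l'_{0,j}$ when $i=0$. Two tiny points you could spell out for completeness, both trivial: the chain $a_j\to b_i$ with no grid point is dominated since $1.5(n-j)(n-j+1)\geq 3$; and within column $j$ the $l'_{i'',j}$ entries with $0<i''<i$ are also ruled out, which your ``shifting to a lower row'' argument covers once one notes $c_k(l_{i'',j},b_i)\geq c_k(l'_{i'',j},b_i)$ there (their difference is $3(i-i'')+A_{j,i}-A_{j,i''}-A_{j+1,i}\geq 1$).
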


The next lemma for weights in the simpler right grid is straightforward to prove:
\begin{lemma}\label{lem:paths2}
For any $0 \leq i,j,k < n$, $c_k(b_i,a'_j)=(3n-3j)(n-i-1)+1.5(n-j)(n-j+1)+1+B_{i,k}$.
\end{lemma}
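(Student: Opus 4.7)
The plan is to mirror the structure of the proof of Lemma~\ref{lem:cases}, exploiting the fact that the right grid contains only $S_r$ (with column weights $3(c+1)$ and no $S_l'$-style antichain alternative), so the combinatorics reduces to a simple staircase optimisation.

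First, from the explicit coordinates I would observe that $b_i$ dominates only the points of $S_r$ whose row index lies in $\{i,i+1,\ldots,n-1\}$, and that $a'_j$ is dominated only by those points of $S_r$ whose column index is at most $n-j-1$. I would also check that for two cells of the right grid, $(i_1,c_1)\prec(i_2,c_2)$ iff $i_1\leq i_2$, $c_1\leq c_2$, with at least one inequality strict. Hence every chain from $b_i$ to $a'_j$ decomposes as $b_i$, followed by a monotone staircase inside the sub-rectangle $\{i,\ldots,n-1\}\times\{0,\ldots,n-j-1\}$, followed by $a'_j$. Since each cell carries positive weight depending only on its column, I would then assume without loss of generality that the staircase visits every row from $i$ to $n-1$ and occupies a contiguous block of rows in each column.

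Next, a one-line exchange argument pins down the optimum. If an optimal staircase spends more than one row in some column $c<n-j-1$, shifting the extra visit into column $n-j-1$ (equivalently, performing the column-$c\to c+1$ transition one row earlier and compensating upwards in the rightmost column) preserves the staircase property and strictly increases the weight by a positive multiple of $3$. Iterating collapses every optimum to the canonical path ``across row $i$ from column $0$ to column $n-j-1$, then up column $n-j-1$ to row $n-1$''. Summing its weight gives $\sum_{c=0}^{n-j-1}3(c+1)+(n-i-1)\cdot 3(n-j)=1.5(n-j)(n-j+1)+3(n-j)(n-i-1)$, and adding $B_{i,k}$ from $b_i$ and $1$ from $a'_j$ recovers the stated formula.

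The main (mild) obstacle is making this exchange step fully watertight; but because the right grid offers no $S_l'$-analogue whose weight could locally outweigh a later column, the argument never requires the delicate three-way case split that appears in Lemma~\ref{lem:cases}. Alternatively, one could prove the statement by downward induction on $j$ and then on $i$, exactly as in Lemma~\ref{lem:cases}, with only the single ``keep going right, then up'' transition to consider at each inductive step.
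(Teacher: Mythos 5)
Your proof is correct and spells out why the paper labels this lemma ``straightforward to prove'': with no $S_l'$-style turn-antichain in the right grid, the optimisation collapses to a single monotone lattice path, and the column-shifting exchange immediately forces the L-shaped staircase. The paper itself omits the proof, so there is no published argument to compare against, but your staircase decomposition plus exchange argument is a clean and complete justification, and the arithmetic checks out (the two ways you write the sum, $\sum_{c=0}^{n-j-1}3(c+1)+3(n-j)(n-i-1)$ and $1.5(n-j)(n-j+1)+3(n-j)(n-i-1)$, agree and, together with the $+1$ from $a'_j$ and $+B_{i,k}$ from $b_i$, give exactly the claimed formula).

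Two small points. First, the domination terminology is reversed in your opening sentence: in the paper's convention $p_1\prec p_2$ means $p_1$ is dominated \emph{by} $p_2$, so $b_i$ is \emph{dominated by} (not ``dominates'') the $S_r$ points with row index at least $i$, and $a'_j$ \emph{dominates} (not ``is dominated by'') the $S_r$ points with column index at most $n-j-1$; the slips are consistent and do not affect the argument. Second, the step ``every chain from $b_i$ to $a'_j$ decomposes as $b_i$, a staircase in $S_r$, then $a'_j$'' implicitly uses that no point outside $S_r\cup\{b_i,a'_j\}$ can appear; this is worth a sentence: all of $S_l$, $S_l'$, $S_a$ have $x$-coordinate at most $2n^2+n$, strictly below the minimum $x$-coordinate $2n^2+n+1$ of any $b_i$, and $S_{b,k}$ and $S_{a}'$ are antichains, so neither can contribute a second chain member.
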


Observe that in $S_k$ there is a maximum chain from $a_j$ to $a'_j$ containing a point from $S_{b,k}$,
as there is a point from $S_{b,k}$ between any two points from the left and right grid forming a chain.
Thus, from Corollary~\ref{lem:paths1} and Lemma~\ref{lem:paths2}, we have:
\begin{lemma}
For any $0 \leq j,k < n$, $c_k(a_j,a'_j)=\max_{i}\{(3n-3j)(n-1)+3(n-j)(n-j+1)+2+A_{j,i}
+B_{i,k}\}=(3n-3j)(n-1)+3(n-j)(n-j+1)+2+\max_{i}(A_{j,i}+B_{i,k})$.
\end{lemma}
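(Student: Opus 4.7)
The plan is to exploit the observation stated just before the lemma: among all maximum chains from $a_j$ to $a'_j$ in $S_k$, there is at least one that contains some point $b_i$ of the middle antichain $S_{b,k}$. Granting this, such a chain decomposes at $b_i$ into a maximum chain from $a_j$ to $b_i$ and a maximum chain from $b_i$ to $a'_j$, yielding
\[
c_k(a_j,a'_j) \;=\; \max_{i}\bigl(c_k(a_j,b_i)+c_k(b_i,a'_j)-B_{i,k}\bigr),
\]
where the correction $-B_{i,k}$ accounts for the fact that $b_i$ contributes its weight to both sub-chains but should be counted only once.

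Substituting the closed forms from Corollary~\ref{lem:paths1} and Lemma~\ref{lem:paths2}, the $i$-dependent linear terms collapse via $(3n-3j)\,i+(3n-3j)(n-i-1)=(3n-3j)(n-1)$, the two copies of $1.5(n-j)(n-j+1)$ merge into $3(n-j)(n-j+1)$, and the constants sum to $2$. After subtracting one $B_{i,k}$, the bracketed expression becomes
\[
(3n-3j)(n-1)+3(n-j)(n-j+1)+2+A_{j,i}+B_{i,k},
\]
and since the prefix before $A_{j,i}+B_{i,k}$ is independent of $i$, pulling the maximum over $i$ through yields precisely the stated formula.

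The only step requiring real care is justifying that we may assume the maximum chain passes through $S_{b,k}$. Given an arbitrary chain $\pi$ from $a_j$ to $a'_j$, let $p$ be the last point of $\pi$ lying in the left grid $S_l\cup S_l'$ and $q$ the first subsequent point of $\pi$ lying on the right side; both exist by the definition of the endpoints. By the coordinates fixed in the construction of $S_k$, the $x$-coordinates $2n(n+1)-i$ of the $b_i$'s lie strictly between the two grids, while their $y$-coordinates $(i+1)(3n+1)$ interleave the rows of both grids. Hence some index $i$ satisfies $p\prec b_i\prec q$, and splicing $b_i$ into $\pi$ produces a valid chain whose weight is larger by exactly $B_{i,k}\geq 0$. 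This reduction to chains through $S_{b,k}$ is the only conceptual step; the remainder is bookkeeping with the formulas already derived.
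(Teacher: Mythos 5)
Your proposal follows the same route as the paper: argue that some maximum $a_j$-to-$a'_j$ chain passes through a point $b_i\in S_{b,k}$, split there, and combine Corollary~\ref{lem:paths1} with Lemma~\ref{lem:paths2} while subtracting the double-counted $B_{i,k}$; the algebra you carry out is exactly what the paper leaves implicit. One small imprecision: you assert $p$ and $q$ ``both exist by the definition of the endpoints,'' but $a_j\in S_a$ and $a'_j\in S_a'$ are not grid points, so a chain could a priori touch neither grid; the easy fix is to default to $p=a_j$ and/or $q=a'_j$ in that degenerate case and note that $a_j\prec b_i\prec a'_j$ holds for every $i$, after which your splicing argument goes through unchanged.
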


Therefore, we achieved that for the maximum chain from $a_j$ to $a'_j$ going through $b_i$,
the only dependence on $i$ are values from matrices.
Moreover, observe that in any $S_k$ the maximum weight chain on points with $X$-coordinates between
$a_j$ and $a'_j$ is $c_k(a_j,a'_j)$, that is, such a chain starts with $a_j$ and ends with $a'_j$.
This is true because any chain can contain at most one point from $S_a$ and at most one point from $S'_a$,
and when we consider chains on points with $X$-coordinates between $a_j$ and $a'_j$, we can always either
replace some point $a_{j'}$ with $j'>j$ by $a_j$ or just add $a_j$ as the first element of a chain,
and similarly for $a'_j$.
This will allow us to compute (max,$+$)-product of matrices using LIS supporting 1D-queries.

\paragraph{Weighted matrices.}
The previous embedding for Boolean matrices extends naturally to the case of
$A$ and $B$ having integer weights in $\{0,\ldots,M\}$ for some $M \in \mathrm{poly}(n)$.
We only need to multiply the previous weights of points in sets $S_l$ and $S_r$ by $M$,
and change weights of point $l'_{i,j}$ in set $S_l'$ to $3M(n-j)+A_{j,i}$.
All logic on the maximum chains still applies, and in the weighted case the following can be proven:

\begin{lemma}\label{lem:weighted}
For any $0 \leq j,k < n$, $c_k(a_j,a'_j)=M(3n-3j)(n-1)+3M(n-j)(n-j+1)+2+\max_{i}(A_{j,i}+B_{i,k})$.
\end{lemma}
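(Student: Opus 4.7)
The plan is to mirror the proofs of Lemma~\ref{lem:cases}, Corollary~\ref{lem:paths1}, and Lemma~\ref{lem:paths2} essentially verbatim, scaling all grid weights by the factor $M$. Concretely, I would restate the four cases of Lemma~\ref{lem:cases} with every coefficient coming from the grid (the constants $3(n-j)$, $3n-3j$, $3n-3j-3$, and $1.5(n-j)(n-j+1)$) multiplied by $M$, while the additive contributions $A_{j,\cdot}$, $A_{j+1,\cdot}$, $B_{i',k}$ stay unchanged. The induction on decreasing $j$ and then decreasing $i$ is carried out exactly as before; the base cases $j=n-1$ and $i=i'$ go through immediately because only one $A$-entry can enter any such subproblem, so no comparison between entries of different magnitudes is needed.

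The only place in the original argument that truly used the Boolean assumption is the final $\max$ simplification in the derivation of equality~2. In the weighted setting this becomes
\[
\max\bigl\{A_{j,i'},\ 3M(i-i'+1)+A_{j,i+1}+A_{j+1,i'},\ 3M(i-i')+A_{j+1,i'}\bigr\},
\]
and I expect the main (indeed, essentially the only) obstacle to be verifying that $A_{j,i'}$ still wins. Since $i'-i>1$ forces $3M(i-i'+1)\le -3M$ and $3M(i-i')\le -6M$, while every $A$-entry is at most $M$, both alternatives are bounded above by $-M<A_{j,i'}$. This is exactly where the scaling factor $M$ is needed: a premature column change can gain at most $2M$ from two $A$-entries but must forfeit at least $3M$ of the diminishing column weights. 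The analogous check in the $i=i'-1$ branch is similar and easier.

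Once the weighted analog of Lemma~\ref{lem:cases} is in hand, Corollary~\ref{lem:paths1} and Lemma~\ref{lem:paths2} transfer with the same $M$-scaling, yielding
\[
c_k(a_j,b_i)=M(3n-3j)i+1.5M(n-j)(n-j+1)+1+A_{j,i}+B_{i,k},
\]
\[
c_k(b_i,a'_j)=M(3n-3j)(n-i-1)+1.5M(n-j)(n-j+1)+1+B_{i,k}.
\]
Because a maximum chain from $a_j$ to $a'_j$ still passes through some $b_i\in S_{b,k}$ (every such point lies between any pair of grid points forming a chain across the two grids, and the scaling does not affect this structural fact), I finish by summing the two expressions, subtracting the duplicated $B_{i,k}$, and maximizing over $i$, which gives the claimed identity.
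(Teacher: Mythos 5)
Your proposal is correct and follows exactly the route the paper intends: the paper gives no explicit proof of Lemma~\ref{lem:weighted}, merely asserting that ``all logic on the maximum chains still applies,'' and you have spelled that out by rederiving the $M$-scaled analogues of Lemma~\ref{lem:cases}, Corollary~\ref{lem:paths1}, and Lemma~\ref{lem:paths2} and combining them. Your identification of the one place the argument could conceivably fail --- the $\max$ comparison in case~2 --- and the check that $-3M$ and $-6M$ dominate the at-most-$2M$ gain from two $A$-entries is precisely the point the paper's informal remark (``by changing a column early one can gain at most $2$ \ldots at the same time losing at least $3$'') scales up to.
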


\section{Lower Bounds}
\paragraph{Dynamic weighted LIS.}
Using embeddings defined in the previous section,
it is straightforward to obtain a lower bound for weighted LIS:

\begin{theorem}
There is no algorithm for dynamic weighted LIS with the amortised update and query time $\Oh(n^{1/2-\epsilon})$
for any constant $\epsilon>0$, unless Conjecture~\ref{con:apsp} is false.
This holds even if only weight updates are allowed.
\end{theorem}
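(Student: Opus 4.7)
The plan is to reduce from the $(\max,+)$-product of two $n \times n$ integer matrices $A$, $B$ with entries in $\{0,\ldots,M\}$ for some $M = \mathrm{poly}(n)$; this problem is equivalent to APSP, so a truly subcubic algorithm for it would falsify Conjecture~\ref{con:apsp}. By Lemma~\ref{lem:weighted}, once a dynamic weighted LIS instance stores the weighted embedding $S_k$, each entry $C_{j,k} = \max_i(A_{j,i}+B_{i,k})$ of the product matrix is fully determined by $c_k(a_j, a'_j)$, so it suffices to maintain this embedding while repeatedly reading off the values $c_k(a_j,a'_j)$.

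As a preprocessing step I would insert the $3n^{2}+3n$ points of the embedding $S_{0}$ (for $A$ and the $0$-th column of $B$) in left-to-right order by $x$-coordinate, together with $2n$ auxiliary dummy points $d_0^L,\ldots,d_{n-1}^L$ and $d_0^R,\ldots,d_{n-1}^R$, all of initial weight $0$. The dummy $d_j^L$ is placed immediately to the left of $a_j$ in $x$-coordinate with a $y$-coordinate strictly smaller than every other $y$ in the instance; these small $y$-coordinates are chosen pairwise distinct (say, decreasing in $j$) so that no two $d_{j'}^L$'s can lie in a common chain. Symmetrically $d_j^R$ is placed immediately to the right of $a'_j$ with a $y$-coordinate strictly larger than every other $y$, again pairwise distinct. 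Since all dummies start with weight $0$ they do not change any maximum-weight chain, and the resulting instance has size $N = O(n^{2})$. After this initial phase the reduction uses only weight updates and queries.

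To compute $C$ I iterate over $k = 0,1,\ldots,n-1$. At iteration $k$ I first perform $n$ weight updates that replace the weights of $S_{b,k-1}$ by those of $S_{b,k}$ (for $k=0$ this is free). Then, for each $j$, I extract $C_{j,k}$ as follows: set the weights of $d_j^L$ and $d_j^R$ to a single value $W$ chosen to exceed the total weight of all non-dummy points (e.g.\ $W := M n^{3}$ suffices), query the global LIS weight $V$, and reset both dummy weights back to $0$. Assuming the key identity $V = 2W + c_k(a_j,a'_j)$ (justified below), Lemma~\ref{lem:weighted} then recovers $C_{j,k}$ in constant time. In total the reduction uses $O(n^{2})$ insertions followed by $O(n^{2})$ weight updates and $n^{2}$ queries; under the assumed amortised cost $O(N^{1/2-\epsilon}) = O(n^{1-2\epsilon})$ per operation, the total running time is $O(n^{3-2\epsilon})$, giving a truly subcubic algorithm for $(\max,+)$-product and contradicting Conjecture~\ref{con:apsp}.

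The main step to check is thus the identity $V = 2W + c_k(a_j,a'_j)$. Because $W$ dominates every other weight in the instance, any maximum-weight chain must contain both $d_j^L$ and $d_j^R$; as $d_j^L$ has the smallest $y$-coordinate in the instance the chain must begin at $d_j^L$, and symmetrically it must end at $d_j^R$. The interior of the chain therefore lies among the points whose $x$-coordinate is strictly between those of $d_j^L$ and $d_j^R$. This set consists of the real points of the embedding whose $x$-coordinates lie between $a_j$ and $a'_j$, together with further dummies of weight $0$; the pairwise distinct $y$-coordinates within each dummy family prevent any two of them from appearing together in the chain. Hence the interior weight is upper bounded by the maximum-weight chain on those real points, which by the observation preceding the weighted-matrices paragraph in Section~3 equals $c_k(a_j,a'_j)$ and is realised by the chain from $a_j$ to $a'_j$. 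Prepending $d_j^L$ and appending $d_j^R$ to this chain attains $2W + c_k(a_j,a'_j)$, so this is indeed the global maximum, as required.
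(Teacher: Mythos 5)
Your reduction matches the paper's: embed $A$ and one column of $B$ via $S_k$, read off $\max_i(A_{j,i}+B_{i,k})$ from the maximum-weight chain between $a_j$ and $a'_j$ using Lemma~\ref{lem:weighted}, and move to the next column with $n$ weight updates on $S_{b,k}$. The paper's proof is terse about two points that you correctly spell out: (i) how the ``chain between $a_j$ and $a'_j$'' query is realised by a global LIS query, and (ii) how this is done using only weight updates, which the theorem explicitly claims. Pre-inserting the $2n$ dummy barrier points $d_j^L, d_j^R$ with weight $0$ once, then temporarily boosting the pair for index $j$ to a dominating weight $W$, is exactly the right mechanism, and your argument that the optimal chain must then begin at $d_j^L$, end at $d_j^R$, and have interior weight equal to $c_k(a_j,a'_j)$ is sound (since the dummies outside of index $j$ have weight $0$ and cannot increase the optimum, and no real point outside the $x$-range $[a_j,a'_j]$ is comparable to both boosted dummies). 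The one small quantitative slip is the suggested value $W := Mn^3$: the non-dummy points number $3n^2+3n$ and have weights up to $3Mn+M$, so their total weight can exceed $Mn^3$; you need $W$ strictly larger than the total non-dummy weight, e.g.\ $W := 10Mn^3$ or simply (maximum point weight)$\times$(number of points $+1$). With that adjustment the proof is correct and is essentially the paper's argument, just with the ``only weight updates'' refinement made explicit.
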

\begin{proof}
We reduce from (max,$+$)-product using defined embeddings.
We are given $n \times n$ matrices $A$, $B$ having integer weights in $\{0,\ldots,M\}$
and want to compute $C$, with $C_{i,j}=\max_{k}(A_{i,k}+B_{k,j})$.
Performing $\Oh(n^2)$ updates, we insert all the points from embedding $S_1$ of $A$ and $B_1$.
Then we make $n$ queries, for each $0 \leq j < n$ asking about the maximum weight of a chain on points
with $X$-coordinates between $a_j$ and $a'_j$.
From Lemma~\ref{lem:weighted}, this is $M(3n-3j)(n-1)+3M(n-j)(n-j+1)+2+\max_{i}(A_{j,i}+B_{i,1})$.
As only the last part of this formula is not fixed, we can extract from the answer the value of $C_{j,1}$.
Thus, after $n$ queries we have the first column of $C$ computed.
Then we update weights of points in $S_{b,1}$ to get $S_{b,2}$, which transforms $S_1$ into $S_2$.
This process is repeated until we get the whole $C$.
The cost is $\Oh(n^2)$ initial updates, then in total another $n^2$ updates and $n^2$ queries.
As the size of the LIS instance at any time is $\Oh(n^2)$ points, claimed conditional bound is proven. 
\end{proof}

\paragraph{Dynamic unweighted LIS supporting 1D-queries.}
For the unweighted case, we cannot use APSP conjecture anymore and need to switch to
the weaker OMv Conjecture~\ref{con:omv}.
Here we define $S_k$ as an embedding of matrix $A$ and vector $v_k$.

\begin{theorem}
There is no algorithm for dynamic (unweighted) LIS supporting 1D-queries with amortised update and query
time $\Oh(n^{1/3-\epsilon})$ for any constant $\epsilon>0$, unless Conjecture~\ref{con:omv} is false.
\end{theorem}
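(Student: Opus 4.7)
The plan is to reduce from online Boolean matrix-vector multiplication (Conjecture~\ref{con:omv}). Given an $n \times n$ Boolean matrix $A$ and an online sequence $v_1, \ldots, v_n$ of Boolean vectors, we need to output $A v_k$ before $v_{k+1}$ arrives. I would reuse the Boolean embedding $S_k$ of $A$ and $v_k$ (playing the role of $B_k$) from the previous section, but turn it into an \emph{unweighted} LIS instance by replacing each weighted point $p$ of weight $w > 0$ by a strict chain of $w$ unit-weight points laid out on a tiny diagonal inside a sub-cell around $p$'s original location; weight-$0$ points simply vanish. After scaling the integer grid of Section~3 by a factor larger than $3n+1$, each such diagonal fits inside a sub-cell disjoint from all other original points, so every antichain and chain relation of the original embedding is preserved. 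The total number of unweighted points is $N = \Theta(n^3)$, dominated by $S_l$, $S_l'$, $S_r$ (each contributing $\Theta(n^2)$ points of weight $\Theta(n)$).

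With the blowup in place, any chain of total weight $W$ in the original weighted embedding lifts to a chain of $W$ points in the unweighted instance, and every chain in the unweighted instance collapses back to a chain in the weighted one of the same total weight. Consequently the length of the LIS on any subarray of the unweighted instance equals the maximum chain weight on the corresponding set of points in the weighted embedding. Specialising Lemma~\ref{lem:weighted} to Boolean weights, the LIS on the subarray between $a_j$ and $a'_j$ equals a quantity depending only on $n$ and $j$ plus $\max_i(A_{j,i} + v_k[i])$; this maximum is $2$ if and only if $(A v_k)[j] = 1$, and at most $1$ otherwise, so each 1D query on such a subarray exposes exactly one bit of $A v_k$.

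Insert all $N = \Theta(n^3)$ points during preprocessing, which is polynomial in $n$ and so within the OMv preprocessing budget. For each online vector $v_k$, first switch the points of $S_{b,k}$ from encoding $v_{k-1}$ to encoding $v_k$ using at most $n$ single-point insertions or deletions (the weights in $S_{b,k}$ are Boolean, so each flipped coordinate is handled by a single insertion or deletion), and then issue $n$ 1D queries, one per $0 \leq j < n$, to read off $A v_k$. This uses $\Theta(n)$ updates and $\Theta(n)$ queries per vector, and hence $\Theta(n^2)$ online operations in total. An algorithm for dynamic unweighted LIS with 1D queries achieving amortised update/query time $\Oh(N^{1/3-\epsilon})$ would therefore process the online phase in $\Oh(n^2 \cdot (n^3)^{1/3-\epsilon}) = \Oh(n^{3-3\epsilon})$ time, contradicting Conjecture~\ref{con:omv}.

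The main obstacle is the geometric realisation of the blowup: the $w$ replacement points at each original location must simultaneously form a strict chain (so that any chain of the unweighted instance passing through the cell of $p$ can harvest all $w$ of them) and lie in a neighbourhood small enough to preserve every original chain and antichain relation between distinct original points. Scaling the integer grid of Section~3 by a factor strictly larger than the maximum weight $3n+1$ and placing the $w$ replacements on a short diagonal inside a single scaled cell achieves both: distinct original points now differ by at least one scaled unit in each coordinate, which dominates the $\Oh(n)$-sized internal diagonal. Once the blowup is set up correctly, the remaining analysis is a direct reuse of the chain-weight analysis from the previous section.
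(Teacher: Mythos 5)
Your blowup of a weight-$w$ point into a strict diagonal chain of $w$ unit points (after sufficiently scaling the integer grid) and the $S_{b,k}$ update trick are both correct and match the paper. However, your time analysis has a genuine gap that breaks the reduction.

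You embed the \emph{entire} $n\times n$ matrix $A$ into a single unweighted LIS instance of size $N=\Theta(n^3)$. Under an amortised bound of $\Oh(N^{1/3-\epsilon})=\Oh(n^{1-3\epsilon})$ per operation, you cannot conclude that the $\Theta(n^2)$ online operations take $\Oh(n^2\cdot n^{1-3\epsilon})=\Oh(n^{3-3\epsilon})$ time. An amortised guarantee bounds the \emph{total} running time of the whole sequence, and here the sequence includes the $\Theta(n^3)$ insertions that build the instance. The data structure is free to defer work from those insertions into the online phase, so the only bound the amortised guarantee gives for the online phase is the overall bound $\Oh\bigl((n^3+n^2)\cdot n^{1-3\epsilon}\bigr)=\Oh(n^{4-3\epsilon})$, which is not subcubic for small $\epsilon$. (And if one uses the OMv conjecture with preprocessing counted, the $\Theta(n^3)$ insertions already cost $\Omega(n^3)$ before a single query is issued, so a subcubic total is impossible regardless of amortisation.) Either way, you do not obtain a contradiction for every $\epsilon>0$, only for $\epsilon>1/3$.

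The paper avoids exactly this problem, and explicitly flags it (``we cannot hope to beat the trivial algorithm for OMv running in cubic time if we create an embedding of the whole input matrix''), by \emph{decomposing} $A$ into $m$ blocks of size $\sqrt{m}\times\sqrt{m}$ and each vector $v_k$ into $\sqrt{m}$ subvectors, then maintaining $m$ separate LIS instances of size $\Oh(m^{1.5})$ each. With the instance size reduced to $\Oh(m^{1.5})$, a single operation costs $\Oh(m^{0.5-\epsilon'})$ with $\epsilon'=1.5\epsilon$, so the $\Oh(m^{2.5})$ insertions needed to build all blocks cost $\Oh(m^{3-\epsilon'})$, and each round of $\Oh(m^{1.5})$ queries and $\Oh(m^{1.5})$ updates costs $\Oh(m^{2-\epsilon'})$, for a grand total of $\Oh(m^{3-\epsilon'})$ over $m$ rounds. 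This is subcubic \emph{including} the instance-building phase, which is what the reduction actually needs. Your proof is missing this decomposition step, and without it the amortised argument does not go through.
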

\begin{proof}
In this paragraph, we use $m$ to denote the size of the matrix and a vector.
Observe that the sum of weights in any embedding $S_k$ of Boolean matrix $A$ of size $m \times m$
and vector $v_k$ of size $m$ is $\Oh(m^3)$, as the largest weight is $\Oh(m)$.
Therefore, by replacing each weighted point with weight $w$ by a chain of $w$ unweighted points,
embedding $S_k$ becomes an instance of unweighted LIS on $\Oh(m^3)$ points.
Importantly, transforming $S_k$ into $S_k'$ still takes only $\Oh(m)$ updates,
since only $m$ points in $S_{b,k}$ needs to be deleted or inserted.
Observe that we cannot hope to beat the trivial algorithm for OMv running in cubic time
if we create an embedding of the whole input matrix, thus we will divide this matrix into smaller square pieces.

Suppose there is an algorithm for dynamic (unweighted) LIS supporting 1D-queries with amortised update
and query time $\Oh(n^{1/3-\epsilon})$, for some constant $\epsilon>0$,
when run on an instance with up to $n$ points.
Let $\epsilon'=1.5\epsilon$.
Divide $A$ into $m$ submatrices $A_{i,j}$ of size $m^{0.5} \times m^{0.5}$, and each vector $v_k$ into 
$m^{0.5}$ subvectors $v_{k,j}$ of size $m^{0.5}$.
Then $Av_k=u_k=(u_{k,1},\ldots,u_{k,m^{0.5}})$, where $|u_{k,i}|=m^{0.5}$ and
$u_{k,i}(j)=\max(0,\max_{l}((A_{i,l} \circ v_{k,l})(j))-1)$, as Boolean multiplication here translates to
the (max,$+$)-product.

Define $S_{k,i,j}$ as an embedding of $A_{i,j}$ and $v_{k,j}$, for $0 \leq j,i < m^{0.5}$ and $0 \leq k < n$.
Total cost of updates to create all $S_{1,\cdot,\cdot}$ is $\Oh(m^{3-\epsilon'})$, as there are
$m$ embeddings, each with $\Oh(m^{1.5})$ points.
To recover $Av_1$, we compute all $A_{i,j} \circ v_{1,j}$ by making $m^{1.5}$ queries,
each in an instance with $\Oh(m^{1.5})$ points, which in total takes time $\Oh(m^{2-\epsilon'})$.
Then by taking $m^{1.5}$ maximums we can compute $u_1=Av_1$.
Transforming all of the $S_{1,i,j}$ into $S_{2,i,j}$, for $0 \leq i,j < m^{0.5}$,
takes in total time $\Oh(m^{2-\epsilon'})$.
Then the process is repeated, allowing us to compute all $u_k$ one by one in an online fashion.
Total time is $\Oh(m^{3-\epsilon'})$, which is a contradiction under Conjecture~\ref{con:omv},
thus assumed algorithm for dynamic LIS could not exist.
\end{proof}

Finally, we note that in both presented lower bounds we can achieve simple trade-offs between bounds
on query and update time, as described in~\cite{AbboudD16}.

\bibliographystyle{plain}
\bibliography{biblio}

\end{document}